	\tikzset{
	  rectangle with rounded corners north west/.initial=4pt,
	  rectangle with rounded corners south west/.initial=4pt,
	  rectangle with rounded corners north east/.initial=4pt,
	  rectangle with rounded corners south east/.initial=4pt,
	}
	\tikzset{->-/.style={decoration={markings,mark=at position #1 with {\arrow{>}}},postaction={decorate}}}
	\tikzset{-<-/.style={decoration={markings,mark=at position #1 with {\arrow{<}}},postaction={decorate}}}
	\tikzstyle{every picture}=[baseline=-0.25em,scale=0.5]
	\tikzstyle{box} = [draw,shape=rectangle,inner sep=2pt,minimum height=6mm,minimum width=6mm,fill=white] 
	\tikzstyle{boxlarge} = [draw,shape=rectangle,inner sep=2pt,minimum height=1.5cm,minimum width=8mm,fill=white] 
	\tikzstyle{boxLarge} = [draw,shape=rectangle,inner sep=2pt,minimum height=2cm,minimum width=10mm,fill=white] 
	\tikzstyle{boxsmall} = [draw,shape=rectangle,inner sep=2pt,minimum height=3mm,minimum width=3mm,fill=white] % small in all directions. Might one day use boxnarrow for small in the largeness direction only.
	\tikzstyle{dot} = [inner sep=0mm,minimum width=3mm,minimum height=3mm,draw,shape=circle,text depth=-0.1mm,fill=white]
	\tikzstyle{Zbwdot} = [dot, fill=\Zbwcolour]
	\tikzstyle{Xbwdot} = [dot, fill=\Xbwcolour]
	\tikzstyle{Ybwdot} = [dot, fill=\Ybwcolour]
	\tikzstyle{Wbwdot} = [dot, fill=\Wbwcolour]
	\tikzstyle{antipode} = [boxsmall] 
	\tikzstyle{state} = [draw, rectangle with rounded corners,
	\tikzstyle{stateV} = [draw, rectangle with rounded corners,
	\tikzstyle{effect} = [draw, rectangle with rounded corners,
	\tikzstyle{effectV} = [draw, rectangle with rounded corners,
	\tikzstyle{scalar}=[diamond,draw,inner sep=1pt,font=\small,fill=white]
	\definecolor{themecolour}{RGB}{255,89,95}
	\tikzstyle{cdnode}=[fill=white]
	\tikzstyle{labelnode}=[fill=white]
	\tikzstyle{tightlabelnode}=[fill=white,inner sep = 0.1mm]
	\tikzstyle{none}=[inner sep=0pt]
	\tikzstyle{whiteline}=[-, line width=4pt, draw=white]
	\tikzstyle{trace}=[circuit ee IEC,thick,ground,scale=2.5]
	\tikzstyle{cotrace}=[circuit ee IEC,thick,ground,rotate=180,scale=2.5]
	\tikzstyle{upground}=[circuit ee IEC,thick,ground,rotate=90,scale=2.5,draw=themecolour]
	\tikzstyle{downground}=[circuit ee IEC,thick,ground,rotate=-90,scale=2.5,draw=themecolour]
	\tikzstyle{single} = [line width=1pt, draw=themecolour] % 
	\tikzstyle{doubled} = [line width=1.8pt, draw=themecolour] % [line width=1.6pt] % [very thick]
	\tikzstyle{empty diagram}=[draw=gray!40!white,dashed,shape=rectangle,minimum width=1cm,minimum height=1cm]
\newcounter{theorem_c}
\newtheorem{theorem}[theorem_c]{Theorem}
\newtheorem{corollary}[theorem_c]{Corollary}
\newtheorem{proposition}[theorem_c]{Proposition}
\newtheorem{lemma}[theorem_c]{Lemma}
\theoremstyle{plain}
\newcommand{\fHilbCategory}{{\operatorname{fHilb}}}
\newcommand{\CPMCategory}[1]{\operatorname{CPM}\left(#1\right)}
\newcommand{\reals}{\mathbb{R}}
\newcommand{\id}{\mathbb{1}}
\title{Finite-dimensional Quantum Observables are the Special Symmetric $\dagger$-Frobenius Algebras of CP Maps}
\author{
    Stefano Gogioso
    \institute{Hashberg Ltd, London, UK}
    \institute{Quantum Group, University of Oxford, UK}
    \email{stefano.gogioso(at)cs.ox.ac.uk}
}
\begin{document}

\maketitle

\begin{abstract}
\noindent We use \emph{purity}, a principle borrowed from the foundations of quantum information, to show that all special symmetric $\dagger$-Frobenius algebras in $\CPMCategory{\fHilbCategory}$---and, in particular, all classical structures---are \emph{canonical}, i.e. that they arise by doubling of special symmetric $\dagger$-Frobenius algebras in $\fHilbCategory$.
% This provides an exact classification of finite-dimensional quantum observables.
\end{abstract}

\section{Introduction}

The exact correspondence \cite{vicary2011categorical} between finite-dimensional C*-algebras and special symmetric $\dagger$-Frobenius algebras ($\dagger$-SSFAs) in $\fHilbCategory$---the dagger compact category of finite-dimensional complex Hilbert spaces and linear maps---is a cornerstone result in the categorical treatment of quantum theory \cite{abramsky2004categorical,coecke2017pqp,heunen2019categories}. A corresponding characterisation in $\CPMCategory{\fHilbCategory}$---the dagger compact category of finite-dimensional Hilbert spaces and completely positive maps \cite{selinger2007dagger}---has been an open question for around 10 years \cite{heunen2012completely,heunen2011mathoverflow}---of interest, for example, in the investigation of the robustness of sequentialisable quantum protocols \cite{boixo2012entangled,heunen2012completely}.

In this work we use \emph{purity}, a principle borrowed from the foundations of quantum information \cite{chiribella2010probabilistic}, to answer this question once and for all: the $\dagger$-SSFAs in $\CPMCategory{\fHilbCategory}$ are exactly the \emph{canonical} ones, the ones arising by doubling of $\dagger$-SSFAs in $\fHilbCategory$.
% As a consequence, finite-dimensional quantum observables---that is, complete families of orthogonal projectors up to global phase---are exactly characterised by the $\dagger$-SSFAs of $\CPMCategory{\fHilbCategory}$.
This is a notable result in that it allows fundamental notions from quantum theory---notably, those of quantum observable and measurement---to be defined directly in the diagrammatic language of CP maps, without reference to $\fHilbCategory$ or the CPM construction, and without relying on the biproduct or convex-linear structure of $\CPMCategory{\fHilbCategory}$.

\section{Purity}

The very definition of morphisms in the category $\CPMCategory{\fHilbCategory}$ means that a CP map $\Phi: \mathcal{H} \rightarrow \mathcal{K}$ can always be \emph{purified}, i.e. that it can be written in terms of a \emph{pure} map $\Psi: \mathcal{H} \rightarrow \mathcal{K}\otimes \mathcal{E}$ and discarding of an ``environment'' system $\mathcal{E}$:
\[
\begin{tikzpicture}
    \begin{pgfonlayer}{nodelayer}
        \node (Phi)  {$\Phi$};
        % \node[box,doubled] (Phi)  {$\Phi$};
     %  \node (Phi_in) [below of = Phi] {};
     %  \node (Phi_out) [above of = Phi] {};
        \node (eq) [right of = Phi] {$:=$};
        \node[box,doubled,minimum width = 10mm, xshift = +4mm] (Psi) [right of = eq] {$\Psi$};
        \node (Psi_in) [below of = Psi] {};
        \node (Psi_out) [above of = Psi, xshift = -3mm] {};
        \node (Psi_out_a) [below of = Psi_out] {};
        \node (Psi_out_trace) [above of = Psi, xshift = +4mm, yshift = +4mm,upground] {};
        \node (Psi_out_trace_a) [below of = Psi_out_trace] {};
    \end{pgfonlayer}
    \begin{pgfonlayer}{edgelayer}
        % \draw[-,doubled] (Phi_in.center) to (Phi);
        % \draw[-,doubled] (Phi_out.center) to (Phi);
        \draw[-,doubled] (Psi_in.center) to (Psi);
        \draw[-,doubled] (Psi_out.center) to (Psi_out_a.center);
        \draw[-,doubled] (Psi_out_trace) to (Psi_out_trace_a.center);
    \end{pgfonlayer}
\end{tikzpicture} 
    % \Phi = \operatorname{Tr}_{\mathcal{E}}(\Psi)
\]
In this work, by a \emph{pure} CP map $\Psi:\mathcal{H} \rightarrow \mathcal{K}$ we mean a CP map in the form $\Psi=\CPMCategory{\psi}$, arising by doubling of a morphism $\psi:\mathcal{H} \rightarrow \mathcal{K}$ in $\fHilbCategory$. This ``diagrammatic'' notion of purity is connected to the notion of purity used in the foundations of quantum information by the following \emph{purity principle}.

\newpage
\begin{proposition}[Purity Principle]\hfill\\
    If the following holds for some pure CP maps $\Psi: \mathcal{H} \rightarrow \mathcal{K}\otimes \mathcal{E}$ and $F: \mathcal{H} \rightarrow \mathcal{K}$:
    \[
        \begin{tikzpicture}
            \begin{pgfonlayer}{nodelayer}
                \node[box,doubled,minimum width = 10mm] (Psi) [right of = eq] {$\Psi$};
                \node (Psi_in) [below of = Psi] {};
                \node (Psi_out) [above of = Psi, xshift = -3mm] {};
                \node (Psi_out_a) [below of = Psi_out] {};
                \node (Psi_out_trace) [above of = Psi, xshift = +4mm, yshift = +4mm,upground] {};
                \node (Psi_out_trace_a) [below of = Psi_out_trace] {};
                \node (eq) [right of = Psi] {$=$};
                \node[box,doubled] (F) [right of = eq]  {$F$};
                \node (F_in) [below of = F] {};
                \node (F_out) [above of = F] {};
            \end{pgfonlayer}
            \begin{pgfonlayer}{edgelayer}
                \draw[-,doubled] (F_in.center) to (F);
                \draw[-,doubled] (F_out.center) to (F);
                \draw[-,doubled] (Psi_in.center) to (Psi);
                \draw[-,doubled] (Psi_out.center) to (Psi_out_a.center);
                \draw[-,doubled] (Psi_out_trace) to (Psi_out_trace_a.center);
            \end{pgfonlayer}
        \end{tikzpicture} 
        % \operatorname{Tr}_{\mathcal{E}}(\Psi) = F
    \]
    then there is a normalised pure state $f$ on $\mathcal{E}$ such that:
    \[
        \begin{tikzpicture}
            \begin{pgfonlayer}{nodelayer}
                \node[box,doubled,minimum width = 10mm] (Psi) [right of = eq] {$\Psi$};
                \node (Psi_in) [below of = Psi] {};
                \node (Psi_out) [above of = Psi, xshift = -3mm] {};
                \node (Psi_out_a) [below of = Psi_out] {};
                \node (Psi_out_trace) [above of = Psi, xshift = +3mm] {};
                \node (Psi_out_trace_a) [below of = Psi_out_trace] {};
                \node (eq) [right of = Psi] {$=$};
                \node[box,doubled] (F) [right of = eq]  {$F$};
                \node (F_in) [below of = F] {};
                \node (F_out) [above of = F] {};
                \node[stateV,doubled] (f) [right of = F] {$f$};
                \node (f_out) [above of = f] {};
            \end{pgfonlayer}
            \begin{pgfonlayer}{edgelayer}
                \draw[-,doubled] (Psi_in.center) to (Psi);
                \draw[-,doubled] (Psi_out.center) to (Psi_out_a.center);
                \draw[-,doubled] (Psi_out_trace.center) to (Psi_out_trace_a.center);
                \draw[-,doubled] (F_in.center) to (F);
                \draw[-,doubled] (F_out.center) to (F);
                \draw[-,doubled] (f_out.center) to (f);
            \end{pgfonlayer}
        \end{tikzpicture} 
        % \Psi = F \otimes f
    \]
\end{proposition}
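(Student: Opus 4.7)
The plan is to unpack the diagrammatic hypothesis into a statement about the underlying morphisms in $\fHilbCategory$, and then invoke the essential uniqueness of Stinespring dilations. By the definition of purity, write $\Psi = \CPMCategory{\psi}$ and $F = \CPMCategory{\phi}$ for morphisms $\psi : \mathcal{H} \to \mathcal{K}\otimes\mathcal{E}$ and $\phi : \mathcal{H} \to \mathcal{K}$ in $\fHilbCategory$. Discarding in $\CPMCategory{\fHilbCategory}$ corresponds to partial trace at the level of the underlying Hilbert spaces, so the hypothesis becomes the equality of CP maps
\[
    \operatorname{Tr}_{\mathcal{E}}\!\left[\psi\,\rho\,\psi^\dagger\right] \;=\; \phi\,\rho\,\phi^\dagger \qquad (\text{for all }\rho).
\]
Equivalently, both $\psi$ and $\phi$---the latter viewed trivially as $\phi : \mathcal{H} \to \mathcal{K}\otimes\mathbb{C}$---are Stinespring dilations of the same CP map.

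Next, I would apply Stinespring uniqueness. Because the dilation provided by $\phi$ has a one-dimensional environment, it is automatically minimal, and the theorem delivers an isometry $W : \mathbb{C} \to \mathcal{E}$ such that $\psi = (\id_{\mathcal{K}} \otimes W)\,\phi$. An isometry out of $\mathbb{C}$ is precisely a unit vector $\lvert f\rangle \in \mathcal{E}$, so $\psi = \phi \otimes \lvert f\rangle$ holds already in $\fHilbCategory$. Applying the doubling functor $\CPMCategory{-}$ to both sides then yields
\[
    \Psi \;=\; \CPMCategory{\psi} \;=\; \CPMCategory{\phi} \otimes \CPMCategory{\lvert f\rangle} \;=\; F \otimes f,
\]
with $f := \CPMCategory{\lvert f\rangle}$ a normalised pure state on $\mathcal{E}$, which is exactly the diagrammatic conclusion.

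The degenerate case $\phi = 0$ requires a brief separate argument: the hypothesis forces $\operatorname{Tr}_{\mathcal{E}}[\psi\,\psi^\dagger] = 0$, hence $\psi = 0$, and any unit vector $\lvert f\rangle$ works. The main load-bearing step is the invocation of Stinespring uniqueness itself: the operator-algebraic statement is classical, but a fully self-contained presentation inside the graphical calculus of $\fHilbCategory$ would have to reconstruct it diagrammatically, essentially reproducing the content of this very principle---which is why in process-theoretic foundations of quantum theory it is often axiomatised rather than derived. I therefore expect the proof to be short, with the only subtlety being how one wishes to package that uniqueness step.
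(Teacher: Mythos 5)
Your argument is correct, but note that the paper does not actually prove this proposition: it is stated as a principle (its conceptual origin credited to Chiribella--D'Ariano--Perinotti), taken as a known consequence of how morphisms of $\CPMCategory{\fHilbCategory}$ are defined, and the two subsequent propositions are derived from it rather than the other way around. What you have written is precisely the standard operator-algebraic justification one would supply: unpack $\Psi = \CPMCategory{\psi}$ and $F = \CPMCategory{\phi}$, observe that $\psi$ and $\phi\otimes 1_{\mathbb{C}}$ are two Stinespring dilations of the same CP map, and invoke essential uniqueness of dilations to produce the unit vector $\lvert f\rangle$ with $\psi = \phi\otimes\lvert f\rangle$, whence $\Psi = F\otimes f$ after doubling. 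Two small points of precision. First, the claim that a one-dimensional environment makes the dilation ``automatically minimal'' holds only when $\phi\neq 0$ (for $\phi=0$ the minimal environment is zero-dimensional); you do handle $\phi=0$ separately, so the argument is complete, but the phrasing in the main step should be conditioned on $\phi\neq 0$. Second, in the Kraus-operator formulation the uniqueness step is completely elementary here: writing $\psi_i = (\id_{\mathcal{K}}\otimes\langle i\rvert)\psi$, the hypothesis reads $\sum_i \psi_i\rho\psi_i^\dagger = \phi\rho\phi^\dagger$ for all $\rho$, and since the right-hand side has a single linearly independent Kraus operator, each $\psi_i$ must be a scalar multiple $u_i\phi$ with $\sum_i\lvert u_i\rvert^2 = 1$; spelling this out would make the proof self-contained without appealing to the full Stinespring machinery.
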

\noindent By expanding the discarding map in terms of some orthonormal basis of pure states, one straightforwardly gets an equivalent formulation of the principle in terms of sums.
\begin{proposition}[Purity Principle, sums version]\hfill\\
    If the following holds for pure CP maps $(\Psi_i: \mathcal{H} \rightarrow \mathcal{K})_i$ and $F: \mathcal{H} \rightarrow \mathcal{K}$:
    \[
        \begin{tikzpicture}
            \begin{pgfonlayer}{nodelayer}
                \node[box,doubled,minimum width = 8mm] (Psi) [right of = eq] {$\Psi_i$};
                \node (Psi_in) [below of = Psi] {};
                \node (Psi_out) [above of = Psi] {};
                \node (Psi_out_a) [below of = Psi_out] {};
                \node (sum) [left of = Psi, xshift = -2mm,yshift=-1.8mm] {$\mathlarger\sum_i$};
                \node (eq) [right of = Psi] {$=$};
                \node[box,doubled] (F) [right of = eq]  {$F$};
                \node (F_in) [below of = F] {};
                \node (F_out) [above of = F] {};
            \end{pgfonlayer}
            \begin{pgfonlayer}{edgelayer}
                \draw[-,doubled] (F_in.center) to (F);
                \draw[-,doubled] (F_out.center) to (F);
                \draw[-,doubled] (Psi_in.center) to (Psi);
                \draw[-,doubled] (Psi_out.center) to (Psi_out_a.center);
            \end{pgfonlayer}
        \end{tikzpicture} 
        % \sum_i \Psi_i = F
    \]
    then there are coefficients $p_i \in \reals^+$ summing to $1$ such that the following holds for all $i$:
    \[
        \begin{tikzpicture}
            \begin{pgfonlayer}{nodelayer}
                \node[box,doubled,minimum width = 8mm] (Psi) [right of = eq] {$\Psi_i$};
                \node (Psi_in) [below of = Psi] {};
                \node (Psi_out) [above of = Psi] {};
                \node (Psi_out_a) [below of = Psi_out] {};
                \node (eq) [right of = Psi] {$=$};
                \node (p) [right of = eq, xshift = -5mm] {$p_i$};
                \node[box,doubled] (F) [right of = eq, xshift = +2mm]  {$F$};
                \node (F_in) [below of = F] {};
                \node (F_out) [above of = F] {};
            \end{pgfonlayer}
            \begin{pgfonlayer}{edgelayer}
                \draw[-,doubled] (F_in.center) to (F);
                \draw[-,doubled] (F_out.center) to (F);
                \draw[-,doubled] (Psi_in.center) to (Psi);
                \draw[-,doubled] (Psi_out.center) to (Psi_out_a.center);
            \end{pgfonlayer}
        \end{tikzpicture} 
        % \Psi_i = p_i F
    \]
\end{proposition}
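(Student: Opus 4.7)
The plan is to reduce the sums version to the first version of the Purity Principle, by bundling the family $(\Psi_i)_i$ into a single pure CP map with an auxiliary environment and then using the fact that discarding decomposes as a sum of basis effects along any chosen orthonormal basis. I would first pick a Hilbert space $\mathcal{E}$ with orthonormal basis $(|i\rangle)_i$ indexed by the same set as the family, and, using purity of each $\Psi_i$, choose $\psi_i : \mathcal{H} \to \mathcal{K}$ in $\fHilbCategory$ with $\Psi_i = \CPMCategory{\psi_i}$. Then I set $\psi := \sum_i \psi_i \otimes |i\rangle : \mathcal{H} \to \mathcal{K} \otimes \mathcal{E}$ and $\Psi := \CPMCategory{\psi}$, which is pure by construction.

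Expanding the partial trace on $\mathcal{E}$ in the basis $(|i\rangle)_i$, discarding $\mathcal{E}$ from $\Psi$ yields exactly $\sum_i \CPMCategory{\psi_i} = \sum_i \Psi_i = F$, so the hypothesis of the first version of the Purity Principle is met. It therefore produces a normalised pure state $f = \CPMCategory{\phi}$ on $\mathcal{E}$, with $\phi \in \mathcal{E}$ a unit vector, such that $\Psi = F \otimes f$. Post-composing both sides of this identity with the pure effect $\CPMCategory{\langle i|}$ on the $\mathcal{E}$ wire gives $\Psi_i$ on the left (since $(\mathbb{1} \otimes \langle i|)\circ \psi = \psi_i$ by orthonormality) and $|\langle i|\phi\rangle|^2\, F$ on the right (since $\CPMCategory{z} = |z|^2$ for a scalar $z$). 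The claim then holds with $p_i := |\langle i|\phi\rangle|^2$; these coefficients are nonnegative and sum to $\|\phi\|^2 = 1$ by normalisation of $f$.

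The only step requiring any care is verifying that the bundled map $\psi$ behaves correctly both under the global discard (producing $\sum_i \Psi_i$) and under the individual basis effects (extracting each $\Psi_i$), but both computations are immediate consequences of linearity and orthonormality of $(|i\rangle)_i$, exactly as the sentence preceding the statement advertises. I do not expect a substantive obstacle: all the mathematical content is packaged into the first version of the Purity Principle, and the sums version is a bookkeeping corollary obtained by replacing a discard by an expansion in an orthonormal basis of pure effects.
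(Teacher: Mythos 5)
Your proof is correct and follows exactly the route the paper intends: it reduces the sums version to the first Purity Principle by bundling the $\Psi_i$ into a single purification over an auxiliary system $\mathcal{E}$ and expanding the discard (and the resulting normalised pure state $f$) in the orthonormal basis $(|i\rangle)_i$. The paper only sketches this in one sentence, and your write-up fills in precisely those bookkeeping details.
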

\noindent Because $\CPMCategory{\fHilbCategory}$ is dagger compact, finally, one can also straightforwardly obtain a more general formulation of the principle which replaces the purifying state $f$ with a purifying operator $f$.
\begin{proposition}[Purity Principle, operator version]\hfill\\
    If the following holds for some pure CP maps $\Psi: \mathcal{H} \otimes \mathcal{F} \rightarrow \mathcal{K}\otimes \mathcal{E}$ and $F: \mathcal{H} \rightarrow \mathcal{K}$:
    \[
        \begin{tikzpicture}
            \begin{pgfonlayer}{nodelayer}
                \node[box,doubled,minimum width = 10mm] (Psi) [right of = eq] {$\Psi$};
                \node (Psi_in) [below of = Psi, xshift = -3mm] {};
                \node (Psi_in_a) [above of = Psi_in] {};
                \node (Psi_in_trace) [below of = Psi, xshift = +4mm, yshift = -4mm,downground] {};
                \node (Psi_in_trace_a) [above of = Psi_in_trace] {};
                \node (Psi_out) [above of = Psi, xshift = -3mm] {};
                \node (Psi_out_a) [below of = Psi_out] {};
                \node (Psi_out_trace) [above of = Psi, xshift = +4mm, yshift = +4mm,upground] {};
                \node (Psi_out_trace_a) [below of = Psi_out_trace] {};
                \node (eq) [right of = Psi] {$=$};
                \node[box,doubled] (F) [right of = eq]  {$F$};
                \node (F_in) [below of = F] {};
                \node (F_out) [above of = F] {};
            \end{pgfonlayer}
            \begin{pgfonlayer}{edgelayer}
                \draw[-,doubled] (F_in.center) to (F);
                \draw[-,doubled] (F_out.center) to (F);
                \draw[-,doubled] (Psi_in.center) to (Psi_in_a.center);
                \draw[-,doubled] (Psi_in_trace) to (Psi_in_trace_a.center);
                \draw[-,doubled] (Psi_out.center) to (Psi_out_a.center);
                \draw[-,doubled] (Psi_out_trace) to (Psi_out_trace_a.center);
            \end{pgfonlayer}
        \end{tikzpicture} 
        % \operatorname{Tr}_{\mathcal{E}}^{\mathcal{F}}(\Psi) = F
    \]
    then there is a pure CP map $f: \mathcal{F} \rightarrow \mathcal{E}$ such that:
    \[
        \begin{tikzpicture}
            \begin{pgfonlayer}{nodelayer}
                \node[box,doubled,minimum width = 10mm] (Psi) [right of = eq] {$\Psi$};
                \node (Psi_in) [below of = Psi, xshift = -3mm] {};
                \node (Psi_in_a) [above of = Psi_in] {};
                \node (Psi_in_trace) [below of = Psi, xshift = +3mm] {};
                \node (Psi_in_trace_a) [above of = Psi_in_trace] {};
                \node (Psi_out) [above of = Psi, xshift = -3mm] {};
                \node (Psi_out_a) [below of = Psi_out] {};
                \node (Psi_out_trace) [above of = Psi, xshift = +3mm] {};
                \node (Psi_out_trace_a) [below of = Psi_out_trace] {};
                \node (eq) [right of = Psi] {$=$};
                \node[box,doubled] (F) [right of = eq]  {$F$};
                \node (F_in) [below of = F] {};
                \node (F_out) [above of = F] {};
                \node[box,doubled] (f) [right of = F]  {$f$};
                \node (f_in) [below of = f] {};
                \node (f_out) [above of = f] {};
            \end{pgfonlayer}
            \begin{pgfonlayer}{edgelayer}
                \draw[-,doubled] (Psi_in.center) to (Psi_in_a.center);
                \draw[-,doubled] (Psi_in_trace.center) to (Psi_in_trace_a.center);
                \draw[-,doubled] (Psi_out.center) to (Psi_out_a.center);
                \draw[-,doubled] (Psi_out_trace.center) to (Psi_out_trace_a.center);
                \draw[-,doubled] (F_in.center) to (F);
                \draw[-,doubled] (F_out.center) to (F);
                \draw[-,doubled] (f_in.center) to (f);
                \draw[-,doubled] (f_out.center) to (f);
            \end{pgfonlayer}
        \end{tikzpicture} 
        % \Psi = F \otimes f
    \]
\end{proposition}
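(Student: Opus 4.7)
The plan is to reduce the operator version to the state version of the purity principle, exploiting the dagger compact structure of $\CPMCategory{\fHilbCategory}$ to bend wires. The key observation is that the cap $\cap_\mathcal{F}:I\to\mathcal{F}\otimes\mathcal{F}$ and cup $\cup_\mathcal{F}:\mathcal{F}\otimes\mathcal{F}\to I$ are themselves pure CP maps, being doublings of the corresponding morphisms in $\fHilbCategory$, so all wire-bending stays within the pure fragment.

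Concretely, I would define the pure CP map
\[
    \widehat{\Psi} \;:=\; (\Psi \otimes \operatorname{id}_\mathcal{F}) \circ (\operatorname{id}_\mathcal{H} \otimes \cap_\mathcal{F}) \;:\; \mathcal{H} \longrightarrow \mathcal{K} \otimes \mathcal{E} \otimes \mathcal{F},
\]
obtained by bending the $\mathcal{F}$ input of $\Psi$ to become an extra $\mathcal{F}$ output. The hypothesis can then be rewritten using the standard dagger-compact identity
\[
    \operatorname{disc}^\dagger_\mathcal{F} \;=\; (\operatorname{disc}_\mathcal{F} \otimes \operatorname{id}_\mathcal{F}) \circ \cap_\mathcal{F},
\]
which expresses the ``downground'' on an input as the cap followed by discarding one of the two resulting wires. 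Substituting this into the hypothesis and sliding the uninvolved $\mathcal{F}$ wire past $\Psi$ converts the downground on the input of $\Psi$ into an ordinary upground on the new $\mathcal{F}$ output of $\widehat{\Psi}$, so the hypothesis becomes
\[
    (\operatorname{id}_\mathcal{K} \otimes \operatorname{disc}_\mathcal{E} \otimes \operatorname{disc}_\mathcal{F}) \circ \widehat{\Psi} \;=\; F.
\]
This is exactly the state-version hypothesis for the pure CP map $\widehat{\Psi}$ with combined environment $\mathcal{E}\otimes\mathcal{F}$.

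Applying the state version then produces a normalised pure state $\widehat{f}:I\to\mathcal{E}\otimes\mathcal{F}$ with $\widehat{\Psi} = F\otimes\widehat{f}$. Bending the $\mathcal{F}$ leg of $\widehat{f}$ back to an input through the cup defines the desired pure CP map
\[
    f \;:=\; (\operatorname{id}_\mathcal{E} \otimes \cup_\mathcal{F}) \circ (\widehat{f} \otimes \operatorname{id}_\mathcal{F}) \;:\; \mathcal{F} \longrightarrow \mathcal{E},
\]
and applying the same unbending to both sides of $\widehat{\Psi} = F\otimes\widehat{f}$ recovers $\Psi$ on the left (by the snake equation for $\cap_\mathcal{F}$ and $\cup_\mathcal{F}$) and $F\otimes f$ on the right, as required. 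The main technical delicacy is the identity used in the second step: spelling out the dagger-compact reformulation of the downground as a cap-followed-by-discard, and verifying diagrammatically that this rewrite commutes past $\operatorname{disc}_\mathcal{E}$. Once that bookkeeping is done, the rest of the argument is essentially a mechanical application of the state version to a bent copy of $\Psi$.
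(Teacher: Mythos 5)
Your proposal is correct and is essentially the paper's own argument: the paper offers no detailed proof of the operator version, merely noting that it follows from the state version "because $\CPMCategory{\fHilbCategory}$ is dagger compact," and your wire-bending construction (cap to turn the $\mathcal{F}$ input into an output, rewrite the downground as a discarded half of the cap, apply the state version with environment $\mathcal{E}\otimes\mathcal{F}$, then unbend via the cup and the snake equation) is precisely the spelling-out of that remark. All the ingredients you rely on—purity of the doubled cap/cup, purity of composites of pure maps, and the identity expressing $\operatorname{disc}^\dagger_{\mathcal{F}}$ as a cap with one leg discarded—are valid in $\CPMCategory{\fHilbCategory}$, so the argument goes through.
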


\section{Isometries of CP maps}

\begin{theorem}[Purification of CP isometries]\label{thm_CPisometries}\hfill\\
    Every isometry $\Phi$ in $\CPMCategory{\fHilbCategory}$ can be written as a $\reals^+$-linear combination of pure isometries $V_i$ with pairwise orthogonal images. 
    \[
        \Phi^\dagger \circ \Phi = \id
        \hspace{1cm}\Rightarrow\hspace{1cm}
        \Phi = \sum_i q_i V_i
        \hspace{0.5cm} \text{with} \hspace{0.5cm}
        V_i^\dagger\circ V_j = \delta_{ij} \, \id 
    \]
    In pictures, this means that if the following equation holds, where $\Psi$ is any purification of $\Phi$:
    \[
        \scalebox{0.95}{
        % [inline block 0: 23 envs, 63342 chars -> data_tex | \begin{tikzpicture}             \begin{pgfonlayer}{nodelayer}...]

    \]
    If we have $n \geq 2$, then we can set $i \neq j$ in the equation above, concluding that the RHS state is the zero state.
    However, we also know that $V_i$ is an isometry and that the adjoint of the counit is a state of norm $\operatorname{dim}(\mathcal{H})^2$, hence the RHS state also has norm $\operatorname{dim}(\mathcal{H})^2$.
    So either $n = 1$, in which case the comultiplication is pure, or $n \geq 2 $ and $\operatorname{dim}(\mathcal{H})=0$, in which case the comultiplication is the zero map, which is also pure.
\end{proof}

\begin{lemma}
    \label{projective_algebras}
    Let $\delta: \mathcal{H} \rightarrow \mathcal{H} \otimes \mathcal{H}$ and $\epsilon: \mathcal{H} \rightarrow \mathbb{C}$ be morphisms in $\fHilbCategory$ which satisfy the associativity law up to a phase $e^{i\alpha}$, the symmetry law up to a phase $e^{i\sigma}$, the left unit law exactly, the right unit law up to a phase $e^{i\rho}$ and the Frobenius law up to a phase $e^{i\varphi}$:
    \[
    \begin{tikzpicture}[dot/.append style={rectangle,}]
        \begin{pgfonlayer}{nodelayer}
            \node (delta) [box,single,minimum width = 10mm]{$\delta$};
            \node (delta_in) [below of = delta] {};
            \node[box,single,minimum width = 10mm] (delta2) [above left of = delta, xshift = +2mm, yshift = 4mm] {$\delta$};
            \node (delta_outl) [above of = delta, xshift = -10mm, yshift = 10mm] {}; 
            \node (delta_outc) [above of = delta, xshift = 0mm, yshift = 10mm] {};
            \node (delta_outr) [above of = delta, xshift = +8mm, yshift = 10mm] {};
        \end{pgfonlayer}
        \begin{pgfonlayer}{edgelayer}
            \draw[-,single] (delta_in) to (delta);
            \draw[-,single,out=270,in=45] (delta_outr.center) to (delta);
            \draw[-,single,out=270,in=135] (delta2) to (delta);
            \draw[-,single,out=270,in=135] (delta_outl.center) to (delta2);
            \draw[-,single,out=270,in=45] (delta_outc.center) to (delta2);
        \end{pgfonlayer}
        \node (eq) [right of = delta, xshift = 10mm, yshift=6mm] {$\stackrel{a}{=}$};
        \begin{pgfonlayer}{nodelayer}
            \node (r) [right of =eq, xshift = 0mm, yshift=0mm] {$e^{i\alpha}$};
            \node[box,single,minimum width = 10mm] [right of = eq, xshift = 10mm, yshift=-6mm] (delta) {$\delta$};
            \node (delta_in) [below of = delta] {};
            \node[box,single,minimum width = 10mm] (delta2) [above right of = delta, xshift = -2mm, yshift = 4mm] {$\delta$};
            \node (delta_outl) [above of = delta, xshift = -8mm, yshift = 10mm] {}; 
            \node (delta_outc) [above of = delta, xshift = 0mm, yshift = 10mm] {};
            \node (delta_outr) [above of = delta, xshift = +10mm, yshift = 10mm] {};
        \end{pgfonlayer}
        \begin{pgfonlayer}{edgelayer}
            \draw[-,single] (delta_in) to (delta);
            \draw[-,single,out=270,in=135] (delta_outl.center) to (delta);
            \draw[-,single,out=270,in=45] (delta2) to (delta);
            \draw[-,single,out=270,in=135] (delta_outc.center) to (delta2);
            \draw[-,single,out=270,in=45] (delta_outr.center) to (delta2);
        \end{pgfonlayer}

        \node (spacer) [right of = eq, xshift = 30mm] {};

        \begin{pgfonlayer}{nodelayer}
            \node[box,single,minimum width = 10mm] [right of = spacer, xshift = 10mm] (V) {$\delta$};
            \node (q) [left of = V, xshift = 0mm, yshift = 0mm] {};
            \node (V_in) [below of = V] [stateV, single] {$\epsilon^\dagger$};
            \node (V_outl) [above of = V, xshift =-3mm, yshift = 5mm] {};
            \node (V_outl_a) [below of = V_outl, yshift = -3mm] {};
            \node (V_outr) [above of = V, xshift =+3mm, yshift = 5mm] {};
            \node (V_outr_a) [below of = V_outr, yshift = -3mm] {};
        \end{pgfonlayer}
        \begin{pgfonlayer}{edgelayer}
            \draw[-,single] (V_in.center) to (V);
            \draw[-,single,out=270,in=90] (V_outl.center) to (V_outr_a);
            \draw[-,single,out=270,in=90] (V_outr.center) to (V_outl_a);
        \end{pgfonlayer}
        \node (eq) [right of = V, xshift = +10mm] {$\stackrel{s}{=}$};
        \begin{pgfonlayer}{nodelayer}
            \node (r) [right of =eq, xshift = 0mm] {$e^{i\sigma}$};
            \node[box,single,minimum width = 10mm] [right of = eq, xshift=10mm] (V) {$\delta$};
            \node (V_in) [below of = V] [stateV, single] {$\epsilon^\dagger$};
            \node (V_outl) [above of = V, xshift =-3mm, yshift = 5mm] {};
            \node (V_outl_a) [below of = V_outl, yshift = -3mm] {};
            \node (V_outr) [above of = V, xshift =+3mm, yshift = 5mm] {};
            \node (V_outr_a) [below of = V_outr, yshift = -3mm] {};
        \end{pgfonlayer}
        \begin{pgfonlayer}{edgelayer}
            \draw[-,single] (V_in.center) to (V);
            \draw[-,single] (V_outl.center) to (V_outl_a);
            \draw[-,single] (V_outr.center) to (V_outr_a);
        \end{pgfonlayer}
    \end{tikzpicture}
    \]
    % \[
    %         (\delta \otimes \id{}) \circ \delta
    %         \stackrel{a}{=}
    %         e^{i\alpha} (\id{} \otimes \delta) \circ \delta
    %         \hspace{30mm}
    %         \operatorname{swap} \circ\, \delta \circ \epsilon^\dagger
    %         \stackrel{s}{=}
    %         e^{i\sigma} \delta \circ \epsilon^\dagger
    % \]
    \[
        \begin{tikzpicture}[dot/.append style={rectangle,}]
            \begin{pgfonlayer}{nodelayer}
                \node[box,single,minimum width = 10mm] (V)  {$\delta$};
                \node (V_in) [below of = V] {};
                \node[effectV,single] (V_outl) [above of = V, xshift =-3mm]  {$\epsilon$};       
                \node (V_outl_a) [below of = V_outl] {};
                \node (V_outr) [above of = V, xshift =+3mm] {};
                \node (V_outr_a) [below of = V_outr] {};
            \end{pgfonlayer}
            \begin{pgfonlayer}{edgelayer}
                \draw[-,single] (V_in.center) to (V);
                \draw[-,single] (V_outl.center) to (V_outl_a);
                \draw[-,single] (V_outr.center) to (V_outr_a);
            \end{pgfonlayer}    
            \node (eq) [right of = V, xshift = 5mm] {$\stackrel{lu}{=}$};
            \node (l) [right of =eq, xshift = -3mm] {$$};
            \node (id) [right of = eq, xshift = +3mm] {};
            \node (id_in) [above of = id] {};
            \node (id_out) [below of = id] {};
            \draw[-,single] (id_in.center) to (id_out.center);
            \node (spacer) [right of = id, xshift = 22mm] {};
            \begin{pgfonlayer}{nodelayer}
                \node[box,single,minimum width = 10mm] (V) [right of = spacer, xshift = 10mm]  {$\delta$};
                \node (V_in) [below of = V] {};
                \node (V_outl) [above of = V, xshift =-3mm]  {};    
                \node (V_outl_a) [below of = V_outl] {};
                \node[effectV,single] (V_outr) [above of = V, xshift =+3mm] {$\epsilon$};
                \node (V_outr_a) [below of = V_outr] {};
            \end{pgfonlayer}
            \begin{pgfonlayer}{edgelayer}
                \draw[-,single] (V_in.center) to (V);
                \draw[-,single] (V_outl.center) to (V_outl_a);
                \draw[-,single] (V_outr.center) to (V_outr_a);
            \end{pgfonlayer}    
            \node (eq) [right of = V, xshift = 5mm] {$\stackrel{ru}{=}$};
            \node (r) [right of =eq, xshift = -1mm] {$e^{i\rho}$};
            \node (id) [right of = eq, xshift = +3mm] {};
            \node (id_in) [above of = id] {};
            \node (id_out) [below of = id] {};
            \draw[-,single] (id_in.center) to (id_out.center);
            \node (spacer) [right of = id, xshift = -9mm] {};
        \end{tikzpicture}   
    \]
    \[
        \begin{tikzpicture}[dot/.append style={rectangle,}]
            \begin{pgfonlayer}{nodelayer}
                \node (delta)[box,single,minimum width = 10mm]  {$\delta$};
                \node (delta_in) [below of = delta, yshift=0mm] {};
                \node (delta2) [above right of = delta, yshift = 13mm, xshift = 5mm] [box,single,minimum width = 10mm]{$\delta^\dagger$};
                \node (delta_out) [above of = delta2, yshift=0mm] {}; 
                \node (delta_inr) [below of = delta2, xshift = 8mm, yshift = -20mm] {};
                \node (delta_outl) [above of = delta, xshift = -8mm, yshift = 20mm] {}; 
            \end{pgfonlayer}
            \begin{pgfonlayer}{edgelayer}
                \draw[-,single] (delta_in) to (delta);
                \draw[-,single,out=45,in=-45, looseness=1.2] (delta) to (delta2);
                \draw[-,single,out=135,in=-90] (delta) to (delta_outl);
                \draw[-,single,out=90,in=-135] (delta_inr) to (delta2);
                \draw[-,single] (delta2) to (delta_out);
            \end{pgfonlayer}
            \node (eq) [right of = delta, xshift = 20mm, yshift = 9mm] {$\stackrel{f}{=}$};
            \begin{pgfonlayer}{nodelayer}
                \node (r) [right of =eq, xshift = 0mm] {$e^{i\varphi}$};
                \node (delta)[right of = eq, xshift = 20mm, yshift = -9mm] [box,single,minimum width = 10mm] {$\delta$};
                \node (delta_in) [below of = delta, yshift=0mm] {};
                \node (delta2) [above left of = delta, yshift = 13mm, xshift = -5mm] [box,single,minimum width = 10mm]{$\delta^\dagger$};
                \node (delta_out) [above of = delta2, yshift=0mm] {}; 
                \node (delta_inr) [below of = delta2, xshift = -8mm, yshift = -20mm] {};
                \node (delta_outl) [above of = delta, xshift = 8mm, yshift = 20mm] {}; 
            \end{pgfonlayer}
            \begin{pgfonlayer}{edgelayer}
                \draw[-,single] (delta_in) to (delta);
                \draw[-,single,out=135,in=-135, looseness=1.2] (delta) to (delta2);
                \draw[-,single,out=45,in=-90] (delta) to (delta_outl);
                \draw[-,single,out=90,in=-45] (delta_inr) to (delta2);
                \draw[-,single] (delta2) to (delta_out);
            \end{pgfonlayer}
        \end{tikzpicture}
    \]
    % \[
    %     \left( \id{} \otimes (\delta^\dagger \circ \operatorname{swap})\right) \circ (\delta \otimes \id{})
    %     \stackrel{f}{=}
    %     e^{i\varphi}\left((\delta^\dagger \circ \operatorname{swap}) \otimes \id{}\right) \circ (\id{} \otimes \delta)
    % \]
    % \begin{align*}
    %     (\delta \otimes \id{}) \circ \delta
    %     & \stackrel{a}{=}
    %     e^{i\alpha} (\id{} \otimes \delta) \circ \delta
    %     \\
    %     (\epsilon \otimes \id{}) \circ \delta
    %     & \stackrel{lu}{=}
    %     \id{}
    %     \\
    %     (\id{} \otimes \epsilon) \circ \delta
    %     & \stackrel{ru}{=}
    %     e^{i\rho} \id{}
    %     \\
    %     \operatorname{swap} \circ\, \delta \circ \epsilon^\dagger
    %     & \stackrel{s}{=}
    %     e^{i\sigma} \delta \circ \epsilon^\dagger
    %     \\
    %     ( \id{} \otimes \delta^\dagger) \circ (\delta \otimes \id{})
    %     & \stackrel{f}{=}
    %     e^{i\varphi}(\delta^\dagger \otimes \id{}) \circ (\id{} \otimes \delta)
    %     \\
    % \end{align*}
    Then $(\delta,\epsilon,\delta^\dagger,\epsilon^\dagger)$ is a symmetric $\dagger$-Frobenius algebra in $\fHilbCategory$, i.e. $\alpha = \rho = \sigma = \varphi = 0 \text{ mod } 2\pi$.
\end{lemma}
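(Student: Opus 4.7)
The plan is to derive, from each axiom, a constraint on the phases $\alpha, \rho, \sigma, \varphi$ that forces all of them to be zero modulo $2\pi$. Throughout I assume $\mathcal{H} \neq 0$, since otherwise every phase vanishes trivially. The recurring trick is to collapse each phased equation by composing with copies of $\epsilon$ and $\epsilon^\dagger$ at carefully chosen ports, exploiting the fact that the left unit law holds \emph{exactly} to turn tensorial axioms into scalar identities.

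First I would show $\alpha \equiv 0$ by post-composing both sides of associativity with $\epsilon \otimes \id \otimes \id$ on the leftmost output. The exact left unit law collapses each side to $\delta$, yielding $\delta = e^{i\alpha}\delta$ and hence $e^{i\alpha} = 1$. Next I would show $\rho \equiv \sigma$ by post-composing the symmetry law with $\epsilon \otimes \id$: the braided side gives $(\id \otimes \epsilon)\delta\epsilon^\dagger = e^{i\rho}\epsilon^\dagger$ via the right unit law, while the non-braided side gives $\epsilon^\dagger$ via the exact left unit. I would then show $\varphi \equiv -\rho$ by sandwiching the Frobenius law between $\epsilon^\dagger \otimes \epsilon^\dagger$ at the bottom and $\epsilon \otimes \epsilon$ at the top. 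On the LHS, the right unit law for $\delta^\dagger$ (the dagger of the right unit law, hence with phase $e^{-i\rho}$) together with the left unit reduces the scalar to $e^{-i\rho}\|\epsilon\|^2$. On the RHS, the left unit law for $\delta^\dagger$ (the dagger of the exact left unit) together with another left unit gives $\|\epsilon\|^2$; comparing with the Frobenius phase yields $e^{i\varphi} = e^{-i\rho}$.

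The main obstacle is showing $e^{i\sigma} = 1$, which I would save for last. The key observation is that the swap $\tau$ is an involution. Applying $\tau$ to both sides of $\tau(\delta\epsilon^\dagger) = e^{i\sigma}\delta\epsilon^\dagger$ yields $\delta\epsilon^\dagger = e^{2i\sigma}\delta\epsilon^\dagger$, and since $\delta\epsilon^\dagger \neq 0$ (the state is nonzero because $(\epsilon \otimes \id)\delta\epsilon^\dagger = \epsilon^\dagger$ by the exact left unit, and $\epsilon^\dagger \neq 0$), this forces $e^{i\sigma} \in \{+1, -1\}$. To rule out the antisymmetric case, I would use the general fact that any $c \in \mathcal{H} \otimes \mathcal{H}$ with $\tau c = -c$ satisfies $(f \otimes f)(c) = 0$ for every linear functional $f$, since $(f \otimes f)(c)$ is invariant under swapping the two tensor factors. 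But $(\epsilon \otimes \epsilon)\delta\epsilon^\dagger = \epsilon\epsilon^\dagger = \|\epsilon\|^2 > 0$ by the exact left unit law and $\epsilon \neq 0$, a contradiction. Hence $e^{i\sigma} = 1$, so $\sigma \equiv 0$, whence $\rho \equiv 0$, whence $\varphi \equiv 0$; together with $\alpha \equiv 0$, this completes the proof.
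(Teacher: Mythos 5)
Your proof is correct, and while your treatment of $\alpha$ coincides with the paper's (cap the leftmost output with $\epsilon$ and apply the exact left unit law on both sides of associativity), you handle the remaining three phases by a genuinely different route. The paper kills $\rho$ first, via a second application of associativity: $\delta \stackrel{lu}{=} (\id\otimes\epsilon\otimes\id)\circ(\delta\otimes\id)\circ\delta \stackrel{a}{=} (\id\otimes\epsilon\otimes\id)\circ(\id\otimes\delta)\circ\delta \stackrel{ru}{=} e^{i\rho}\delta$, and only then deduces $\sigma=0$ from the symmetry law capped with $\epsilon$, and $\varphi=0$ from the Frobenius law sandwiched with units and counits (by which point all unit laws are exact, so no phases need to be tracked). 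You instead extract the \emph{relations} $\sigma\equiv\rho$ and $\varphi\equiv-\rho$ from the symmetry and Frobenius laws, and then pin everything down by showing $\sigma\equiv 0$ directly: the involutivity of the swap forces $e^{2i\sigma}=1$, and the antisymmetric case $e^{i\sigma}=-1$ is excluded because $(\epsilon\otimes\epsilon)$ is invariant under precomposition with the swap while $(\epsilon\otimes\epsilon)(\delta\epsilon^\dagger)=\|\epsilon\|^2\neq 0$. The net effect is that your argument uses associativity only to establish $\alpha\equiv 0$, whereas the paper's proof of $\rho\equiv 0$ leans on it; in exchange, the paper's version is more uniform (every phase dies by plugging units into the corresponding law), while yours imports a small structural fact about the swap and symmetric bilinear functionals. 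Both arguments rely on the same nondegeneracy inputs ($\delta\neq 0$, $\epsilon\neq 0$, $\delta\epsilon^\dagger\neq 0$ when $\mathcal{H}\neq 0$), all of which follow from the exactness of the left unit law as you note.
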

\begin{proof}
    The associativity law and left unit law prove that $\alpha = 0 \text{ mod }2\pi$:
    % \[
    % \delta
    % \stackrel{lu}{=}
    % (((\epsilon \otimes \id{})\circ \delta) \otimes \id{}) \circ \delta
    % \stackrel{a}{=}
    % e^{i\alpha}(\epsilon \otimes \delta) \circ \delta
    % \stackrel{lu}{=}
    % e^{i\alpha} \delta
    % \]
    \[
    \begin{tikzpicture}[dot/.append style={rectangle,}]
        \begin{pgfonlayer}{nodelayer}
            \node[box,single,minimum width = 10mm] (V) {$\delta$};
            \node (V_in) [below of = V] {};
            \node (V_outl) [above of = V, xshift =-3mm, yshift = 5mm] {};
            \node (V_outl_a) [below of = V_outl, yshift = -3mm] {};
            \node (V_outr) [above of = V, xshift =+3mm, yshift = 5mm] {};
            \node (V_outr_a) [below of = V_outr, yshift = -3mm] {};
        \end{pgfonlayer}
        \begin{pgfonlayer}{edgelayer}
            \draw[-,single] (V_in.center) to (V);
            \draw[-,single] (V_outl.center) to (V_outl_a);
            \draw[-,single] (V_outr.center) to (V_outr_a);
        \end{pgfonlayer}
        \node (eq) [right of = V, xshift = 3mm] {$\stackrel{lu}{=}$};
        \begin{pgfonlayer}{nodelayer}
            \node [right of =eq, xshift = 5mm, yshift=-5mm] (delta) [box,single,minimum width = 10mm]{$\delta$};
            \node (delta_in) [below of = delta] {};
            \node[box,single,minimum width = 10mm] (delta2) [above left of = delta, xshift = +2mm, yshift = 4mm] {$\delta$};
            \node [effectV,single] (delta_outl) [above of = delta, xshift = -10mm, yshift = 10mm] {$\epsilon$}; 
            \node (delta_outc) [above of = delta, xshift = 0mm, yshift = 10mm] {};
            \node (delta_outr) [above of = delta, xshift = +8mm, yshift = 10mm] {};
        \end{pgfonlayer}
        \begin{pgfonlayer}{edgelayer}
            \draw[-,single] (delta_in) to (delta);
            \draw[-,single,out=270,in=45] (delta_outr.center) to (delta);
            \draw[-,single,out=270,in=135] (delta2) to (delta);
            \draw[-,single,out=270,in=135] (delta_outl.center) to (delta2);
            \draw[-,single,out=270,in=45] (delta_outc.center) to (delta2);
        \end{pgfonlayer}
        \node (eq) [right of = delta, xshift = 5mm, yshift=6mm] {$\stackrel{a}{=}$};
        \begin{pgfonlayer}{nodelayer}
            \node (r) [right of =eq, xshift = -3mm, yshift=0mm] {$e^{i\alpha}$};
            \node[box,single,minimum width = 10mm] [right of = eq, xshift = 7mm, yshift=-6mm] (delta) {$\delta$};
            \node (delta_in) [below of = delta] {};
            \node[box,single,minimum width = 10mm] (delta2) [above right of = delta, xshift = -2mm, yshift = 4mm] {$\delta$};
            \node [effectV,single] (delta_outl) [above of = delta, xshift = -8mm, yshift = 10mm] {$\epsilon$}; 
            \node (delta_outc) [above of = delta, xshift = 0mm, yshift = 10mm] {};
            \node (delta_outr) [above of = delta, xshift = +10mm, yshift = 10mm] {};
        \end{pgfonlayer}
        \begin{pgfonlayer}{edgelayer}
            \draw[-,single] (delta_in) to (delta);
            \draw[-,single,out=270,in=135] (delta_outl.center) to (delta);
            \draw[-,single,out=270,in=45] (delta2) to (delta);
            \draw[-,single,out=270,in=135] (delta_outc.center) to (delta2);
            \draw[-,single,out=270,in=45] (delta_outr.center) to (delta2);
        \end{pgfonlayer}
        \node (eq) [right of = delta, xshift = 5mm, yshift=5mm] {$\stackrel{lu}{=}$};
        \begin{pgfonlayer}{nodelayer}
            \node (r) [right of =eq, xshift = -3mm] {$e^{i\alpha}$};
            \node[box,single,minimum width = 10mm] [right of = eq, xshift=7mm] (V) {$\delta$};
            \node (V_in) [below of = V] {};
            \node (V_outl) [above of = V, xshift =-3mm, yshift = 5mm] {};
            \node (V_outl_a) [below of = V_outl, yshift = -3mm] {};
            \node (V_outr) [above of = V, xshift =+3mm, yshift = 5mm] {};
            \node (V_outr_a) [below of = V_outr, yshift = -3mm] {};
        \end{pgfonlayer}
        \begin{pgfonlayer}{edgelayer}
            \draw[-,single] (V_in.center) to (V);
            \draw[-,single] (V_outl.center) to (V_outl_a);
            \draw[-,single] (V_outr.center) to (V_outr_a);
        \end{pgfonlayer}
    \end{tikzpicture}
    \]
    Then, the associativity law and the unit laws prove that $\rho = 0 \text{ mod }2\pi$:
    % \[
    % e^{i\rho} \delta
    % \stackrel{ru}{=}
    % (((\id{} \otimes \epsilon)\circ \delta) \otimes \id{}) \circ \delta
    % \stackrel{a}{=}
    % (\id{} \otimes ((\epsilon \otimes \id{})\circ \delta)) \circ \delta
    % \stackrel{lu}{=}
    % \delta
    % \]
    \[
    \begin{tikzpicture}[dot/.append style={rectangle,}]
        \begin{pgfonlayer}{nodelayer}
            \node[box,single,minimum width = 10mm] (V) {$\delta$};
            \node (V_in) [below of = V] {};
            \node (V_outl) [above of = V, xshift =-3mm, yshift = 5mm] {};
            \node (V_outl_a) [below of = V_outl, yshift = -3mm] {};
            \node (V_outr) [above of = V, xshift =+3mm, yshift = 5mm] {};
            \node (V_outr_a) [below of = V_outr, yshift = -3mm] {};
        \end{pgfonlayer}
        \begin{pgfonlayer}{edgelayer}
            \draw[-,single] (V_in.center) to (V);
            \draw[-,single] (V_outl.center) to (V_outl_a);
            \draw[-,single] (V_outr.center) to (V_outr_a);
        \end{pgfonlayer}
        \node (eq) [right of = V, xshift = 3mm] {$\stackrel{lu}{=}$};
        \begin{pgfonlayer}{nodelayer}
            \node[box,single,minimum width = 10mm] [right of = eq, xshift = 2mm, yshift=-6mm] (delta) {$\delta$};
            \node (delta_in) [below of = delta] {};
            \node[box,single,minimum width = 10mm] (delta2) [above right of = delta, xshift = -2mm, yshift = 4mm] {$\delta$};
            \node (delta_outl) [above of = delta, xshift = -8mm, yshift = 10mm] {}; 
            \node [effectV,single] (delta_outc) [above of = delta, xshift = 0mm, yshift = 10mm] {$\epsilon$};
            \node (delta_outr) [above of = delta, xshift = +10mm, yshift = 10mm] {};
        \end{pgfonlayer}
        \begin{pgfonlayer}{edgelayer}
            \draw[-,single] (delta_in) to (delta);
            \draw[-,single,out=270,in=135] (delta_outl.center) to (delta);
            \draw[-,single,out=270,in=45] (delta2) to (delta);
            \draw[-,single,out=270,in=135] (delta_outc.center) to (delta2);
            \draw[-,single,out=270,in=45] (delta_outr.center) to (delta2);
        \end{pgfonlayer}
        \node (eq) [right of = delta, xshift = 5mm, yshift=6mm] {$\stackrel{a}{=}$};
        \begin{pgfonlayer}{nodelayer}
            \node [right of =eq, xshift = 5mm, yshift=-5mm] (delta) [box,single,minimum width = 10mm]{$\delta$};
            \node (delta_in) [below of = delta] {};
            \node[box,single,minimum width = 10mm] (delta2) [above left of = delta, xshift = +2mm, yshift = 4mm] {$\delta$};
            \node (delta_outl) [above of = delta, xshift = -10mm, yshift = 10mm] {}; 
            \node [effectV,single] (delta_outc) [above of = delta, xshift = 0mm, yshift = 10mm] {$\epsilon$};
            \node (delta_outr) [above of = delta, xshift = +8mm, yshift = 10mm] {};
        \end{pgfonlayer}
        \begin{pgfonlayer}{edgelayer}
            \draw[-,single] (delta_in) to (delta);
            \draw[-,single,out=270,in=45] (delta_outr.center) to (delta);
            \draw[-,single,out=270,in=135] (delta2) to (delta);
            \draw[-,single,out=270,in=135] (delta_outl.center) to (delta2);
            \draw[-,single,out=270,in=45] (delta_outc.center) to (delta2);
        \end{pgfonlayer}
        \node (eq) [right of = delta, xshift = 5mm, yshift=5mm] {$\stackrel{ru}{=}$};
        \begin{pgfonlayer}{nodelayer}
            \node (r) [right of =eq, xshift = -3mm] {$e^{i\rho}$};
            \node[box,single,minimum width = 10mm] [right of = eq, xshift=7mm] (V) {$\delta$};
            \node (V_in) [below of = V] {};
            \node (V_outl) [above of = V, xshift =-3mm, yshift = 5mm] {};
            \node (V_outl_a) [below of = V_outl, yshift = -3mm] {};
            \node (V_outr) [above of = V, xshift =+3mm, yshift = 5mm] {};
            \node (V_outr_a) [below of = V_outr, yshift = -3mm] {};
        \end{pgfonlayer}
        \begin{pgfonlayer}{edgelayer}
            \draw[-,single] (V_in.center) to (V);
            \draw[-,single] (V_outl.center) to (V_outl_a);
            \draw[-,single] (V_outr.center) to (V_outr_a);
        \end{pgfonlayer}
    \end{tikzpicture}
    \]
    Then, the symmetry law and the unit laws prove that $\sigma = 0 \text{ mod }2\pi$:
    % \[
    % \epsilon^\dagger
    % \stackrel{lu}{=}
    % (\epsilon \otimes \id{}) \circ \delta \circ \epsilon^\dagger
    % =
    % (\id{} \otimes \epsilon)\circ\operatorname{swap} \circ \delta \circ \epsilon^\dagger
    % \stackrel{s}{=}
    % e^{i\sigma} (\id{} \otimes \epsilon) \circ \delta \circ \epsilon^\dagger
    % \stackrel{ru}{=}
    % e^{i\sigma} \epsilon^\dagger
    % \]
    \[
    \begin{tikzpicture}[dot/.append style={rectangle,}]
        \begin{pgfonlayer}{nodelayer}
            \node (V) [stateV, single] {$\epsilon^\dagger$};
            \node (V_out) [above of = V]  {};
        \end{pgfonlayer}
        \begin{pgfonlayer}{edgelayer}
            \draw[-,single] (V.center) to (V_out.center);
        \end{pgfonlayer} 
        \node (eq) [right of = V, xshift = 0mm] {$\stackrel{lu}{=}$};
        \begin{pgfonlayer}{nodelayer}
            \node[box,single,minimum width = 10mm] [right of=eq, xshift = 0mm] (V)  {$\delta$};
            \node (V_in) [stateV, single, below of = V] {$\epsilon^\dagger$};
            \node [effectV,single] (V_outl) [above of = V, xshift =-3mm]  {$\epsilon$};    
            \node (V_outl_a) [below of = V_outl] {};
            \node (V_outr) [above of = V, xshift =+3mm] {};
            \node (V_outr_a) [below of = V_outr] {};
        \end{pgfonlayer}
        \begin{pgfonlayer}{edgelayer}
            \draw[-,single] (V_in.center) to (V);
            \draw[-,single] (V_outl.center) to (V_outl_a);
            \draw[-,single] (V_outr.center) to (V_outr_a);
        \end{pgfonlayer}
        \node (eq) [right of = V, xshift = 0mm] {$=$};
        \begin{pgfonlayer}{nodelayer}
            \node[box,single,minimum width = 10mm] [right of=eq, xshift = 0mm] (V)  {$\delta$};
            \node (V_in) [stateV, single, below of = V] {$\epsilon^\dagger$};
            \node (V_outl) [above of = V, xshift =-3mm, yshift=-1.2mm]  {};    
            \node (V_outl2) [above of = V_outl, yshift=-9mm]  {};    
            \node (V_outl_a) [below of = V_outl, yshift=3mm] {};
            \node[effectV,single] (V_outr) [above of = V, xshift =+3mm] {$\epsilon$};
            \node (V_outr_a) [below of = V_outr, yshift=1.8mm] {};
        \end{pgfonlayer}
        \begin{pgfonlayer}{edgelayer}
            \draw[-,single] (V_in.center) to (V);
            \draw[-,single] (V_outl.south) to (V_outl2.north);
            \draw[-,single,out=270,in=90] (V_outl) to (V_outr_a);
            \draw[-,single,out=270,in=90] (V_outr) to (V_outl_a);
        \end{pgfonlayer}
        \node (eq) [right of = V, xshift = 0mm] {$\stackrel{s}{=}$};
        \begin{pgfonlayer}{nodelayer}
            \node (r) [right of =eq, xshift = -5mm] {$e^{i\sigma}$};
            \node[box,single,minimum width = 10mm] [right of=eq, xshift = 4mm] (V)  {$\delta$};
            \node (V_in) [stateV, single, below of = V] {$\epsilon^\dagger$};
            \node (V_outl) [above of = V, xshift =-3mm]  {};    
            \node (V_outl_a) [below of = V_outl] {};
            \node[effectV,single] (V_outr) [above of = V, xshift =+3mm] {$\epsilon$};
            \node (V_outr_a) [below of = V_outr] {};
        \end{pgfonlayer}
        \begin{pgfonlayer}{edgelayer}
            \draw[-,single] (V_in.center) to (V);
            \draw[-,single] (V_outl.center) to (V_outl_a);
            \draw[-,single] (V_outr.center) to (V_outr_a);
        \end{pgfonlayer}
        \node (eq) [right of = V, xshift = 0mm] {$\stackrel{ru}{=}$};
        \begin{pgfonlayer}{nodelayer}
            \node (r) [right of =eq, xshift = -5mm] {$e^{i\sigma}$};
            \node (V) [stateV, single] [right of=eq, xshift = 2mm] {$\epsilon^\dagger$};
            \node (V_out) [above of = V]  {};
        \end{pgfonlayer}
        \begin{pgfonlayer}{edgelayer}
            \draw[-,single] (V.center) to (V_out.center);
        \end{pgfonlayer} 
    \end{tikzpicture}
    \]
    Finally, the Frobenius law and the right unit law prove that $\varphi = 0 \text{ mod }2\pi$:
    % \begin{align*}
    % \epsilon^\dagger
    % &\stackrel{ru}{=}
    % (\id{} \otimes \epsilon) \circ \delta \circ \epsilon^\dagger
    % \\&\stackrel{ru^\dagger}{=}
    % (\id{} \otimes \epsilon)
    % \circ (\id{} \otimes \delta^\dagger)
    % \circ (\delta \otimes \id{})
    % \circ (\epsilon^\dagger \otimes \epsilon^\dagger)
    % \\&\stackrel{f}{=}
    % e^{i\varphi}
    % (\id{} \otimes \epsilon)
    % \circ (\delta^\dagger \otimes \id{})
    % \circ (\id{} \otimes \delta)
    % \circ (\epsilon^\dagger \otimes \epsilon^\dagger)
    % \\&\stackrel{ru}{=}
    % e^{i\varphi}\delta^\dagger \circ (\epsilon^\dagger \otimes \epsilon^\dagger)
    % \stackrel{ru^\dagger}{=}
    % e^{i\varphi}\epsilon^\dagger
    % \end{align*}
    \[
    \scalebox{0.9}{$
        \begin{tikzpicture}[dot/.append style={rectangle,}]
            \begin{pgfonlayer}{nodelayer}
                \node (V) [stateV, single] {$\epsilon^\dagger$};
                \node (V_out) [above of = V]  {};
            \end{pgfonlayer}
            \begin{pgfonlayer}{edgelayer}
                \draw[-,single] (V.center) to (V_out.center);
            \end{pgfonlayer} 
            \node (eq) [right of = V, xshift = 0mm] {$\stackrel{ru}{=}$};
            \begin{pgfonlayer}{nodelayer}
                \node[box,single,minimum width = 10mm] [right of=eq, xshift = 0mm] (V)  {$\delta$};
                \node (V_in) [stateV, single, below of = V] {$\epsilon^\dagger$};
                \node (V_outl) [above of = V, xshift =-3mm]  {};    
                \node (V_outl_a) [below of = V_outl] {};
                \node[effectV,single] (V_outr) [above of = V, xshift =+3mm] {$\epsilon$};
                \node (V_outr_a) [below of = V_outr] {};
            \end{pgfonlayer}
            \begin{pgfonlayer}{edgelayer}
                \draw[-,single] (V_in.center) to (V);
                \draw[-,single] (V_outl.center) to (V_outl_a);
                \draw[-,single] (V_outr.center) to (V_outr_a);
            \end{pgfonlayer} 
            \node (eq) [right of = V, xshift = 5mm] {$\stackrel{lu^\dagger}{=}$};
            \begin{pgfonlayer}{nodelayer}
                \node (delta)[box,single,minimum width = 10mm] [right of=eq, xshift = 5mm, yshift = -9mm] {$\delta$};
                \node (delta_in) [stateV, single, below of = delta, yshift=0mm] {$\epsilon^\dagger$};
                \node (delta2) [above right of = delta, yshift = 13mm, xshift = 5mm] [box,single,minimum width = 10mm]{$\delta^\dagger$};
                \node (delta_out) [effectV, single, above of = delta2, yshift=0mm] {$\epsilon$}; 
                \node (delta_inr) [stateV, single, below of = delta2, xshift = 8mm, yshift = -20mm] {$\epsilon^\dagger$};
                \node (delta_outl) [above of = delta, xshift = -8mm, yshift = 20mm] {}; 
            \end{pgfonlayer}
            \begin{pgfonlayer}{edgelayer}
                \draw[-,single] (delta_in) to (delta);
                \draw[-,single,out=45,in=-45, looseness=1.2] (delta) to (delta2);
                \draw[-,single,out=135,in=-90] (delta) to (delta_outl);
                \draw[-,single,out=90,in=-135] (delta_inr) to (delta2);
                \draw[-,single] (delta2) to (delta_out);
            \end{pgfonlayer}
            \node (eq) [right of = delta, xshift = 15mm, yshift = 9mm] {$\stackrel{f}{=}$};
            \begin{pgfonlayer}{nodelayer}
                \node (r) [right of =eq, xshift = 0mm] {$e^{i\varphi}$};
                \node (delta)[right of = eq, xshift = 20mm, yshift = -9mm] [box,single,minimum width = 10mm] {$\delta$};
                \node (delta_in) [stateV, single, below of = delta, yshift=0mm] {$\epsilon^\dagger$};
                \node (delta2) [above left of = delta, yshift = 13mm, xshift = -5mm] [box,single,minimum width = 10mm]{$\delta^\dagger$};
                \node (delta_out) [above of = delta2, yshift=0mm] {}; 
                \node (delta_inr) [stateV, single, below of = delta2, xshift = -8mm, yshift = -20mm] {$\epsilon^\dagger$};
                \node (delta_outl) [effectV, single, above of = delta, xshift = 8mm, yshift = 20mm] {$\epsilon$}; 
            \end{pgfonlayer}
            \begin{pgfonlayer}{edgelayer}
                \draw[-,single] (delta_in) to (delta);
                \draw[-,single,out=135,in=-135, looseness=1.2] (delta) to (delta2);
                \draw[-,single,out=45,in=-90] (delta) to (delta_outl);
                \draw[-,single,out=90,in=-45] (delta_inr) to (delta2);
                \draw[-,single] (delta2) to (delta_out);
            \end{pgfonlayer}
            \node (eq) [right of = delta, xshift = 5mm, yshift=9mm] {$\stackrel{ru}{=}$};
            \begin{pgfonlayer}{nodelayer}
                \node (r) [right of =eq, xshift = 0mm] {$e^{i\varphi}$};
                \node (delta) [right of = eq, xshift = 10mm] [box,single,minimum width = 10mm]{$\delta^\dagger$};
                \node (delta_out) [above of = delta, yshift=0mm] {};
                \node (delta_inl) [stateV, single, below of = delta, xshift=4mm, yshift=-9mm] {$\epsilon^\dagger$};
                \node (delta_inr) [stateV, single, below of = delta, xshift=-4mm, yshift=-9mm] {$\epsilon^\dagger$};
            \end{pgfonlayer}
            \begin{pgfonlayer}{edgelayer}
                \draw[-,single,out=90,in=-135] (delta_inl) to (delta);
                \draw[-,single,out=90,in=-45] (delta_inr) to (delta);
                \draw[-,single] (delta) to (delta_out);
            \end{pgfonlayer}
            \node (eq) [right of = delta, xshift = 3mm, yshift=0mm] {$\stackrel{lu^\dagger}{=}$};
            \begin{pgfonlayer}{nodelayer}
                \node (r) [right of =eq, xshift = -2mm] {$e^{i\varphi}$};
                \node (V) [stateV, single] [right of=eq, xshift = 5mm] {$\epsilon^\dagger$};
                \node (V_out) [above of = V]  {};
            \end{pgfonlayer}
            \begin{pgfonlayer}{edgelayer}
                \draw[-,single] (V.center) to (V_out.center);
            \end{pgfonlayer} 
        \end{tikzpicture}
    $}
    \]
    This concludes our proof.
\end{proof}

\begin{corollary}[CP $\dagger$-SSFAs are all canonical]\label{cor_main}\hfill\\
    The $\dagger$-SSFAs in $\CPMCategory{\fHilbCategory}$ are all canonical, i.e. they all arise by doubling of $\dagger$-SSFAs in $\fHilbCategory$.
\end{corollary}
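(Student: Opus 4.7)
The plan is to combine Theorem \ref{thm_main} with Lemma \ref{projective_algebras}. Let $(\Delta, E)$ be a $\dagger$-SSFA in $\CPMCategory{\fHilbCategory}$. The specialness axiom $\Delta^\dagger \circ \Delta = \id$ is precisely the isometry hypothesis of Theorem \ref{thm_main}, while the snake equations follow from the Frobenius law together with the two unit laws in the standard way; coassociativity is not even required, by the footnote to that theorem. Applying Theorem \ref{thm_main}, both $\Delta$ and $E$ are pure CP maps, so there exist morphisms $\delta: \mathcal{H} \to \mathcal{H} \otimes \mathcal{H}$ and $\epsilon: \mathcal{H} \to \mathbb{C}$ in $\fHilbCategory$ with $\Delta = \CPMCategory{\delta}$ and $E = \CPMCategory{\epsilon}$.

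Next I would exploit that the doubling functor $\CPMCategory{\cdot}$ identifies two morphisms of $\fHilbCategory$ if and only if they differ by a global phase. Consequently, each of the $\dagger$-SSFA axioms satisfied by $(\Delta, E)$ in $\CPMCategory{\fHilbCategory}$---coassociativity, the two unit laws, symmetry, and the Frobenius law---translates into the corresponding equation on $(\delta, \epsilon)$ in $\fHilbCategory$, but only up to a phase. Since $(\delta, \epsilon)$ is determined by $(\Delta, E)$ only up to independent phases, I can rescale $\delta$ and $\epsilon$ so that the left unit law $(\epsilon \otimes \id) \circ \delta = \id$ holds exactly, without altering $\Delta$ or $E$; as one checks, the Frobenius phase is invariant under such rescalings, and the other phases are either invariant or absorbed into their projective counterparts. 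This puts $(\delta, \epsilon)$ in exactly the form assumed by Lemma \ref{projective_algebras}, which then forces all remaining phases $\alpha, \rho, \sigma, \varphi$ to vanish modulo $2\pi$, so that $(\delta, \epsilon, \delta^\dagger, \epsilon^\dagger)$ is a symmetric $\dagger$-Frobenius algebra in $\fHilbCategory$.

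It only remains to transfer specialness. The CPM equation $\Delta^\dagger \circ \Delta = \id$ gives $\CPMCategory{\delta^\dagger \circ \delta} = \CPMCategory{\id}$, hence $\delta^\dagger \circ \delta = e^{i\theta}\id$ for some phase; but $\delta^\dagger \circ \delta$ is a positive self-adjoint operator, so $e^{i\theta}$ must be a non-negative real, and by pairing against any nonzero state (or comparing traces) the scalar must equal $1$. Hence $\delta$ is an isometry, giving specialness in $\fHilbCategory$, and the original $(\Delta, E)$ is the doubling of the $\dagger$-SSFA $(\delta, \epsilon, \delta^\dagger, \epsilon^\dagger)$. The main subtlety lies in the phase bookkeeping in the middle step---verifying that the global phases of $\delta$ and $\epsilon$ can be arranged to match the precise hypothesis of Lemma \ref{projective_algebras} (left unit exact, all other laws projective)---but once this is arranged, Theorem \ref{thm_main} and Lemma \ref{projective_algebras} do the rest.
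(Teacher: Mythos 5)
Your proposal is correct and follows essentially the same route as the paper: apply Theorem \ref{thm_main} to get purity of $\Delta$ and $E$, rescale $\epsilon$ by the phase $e^{-i\lambda}$ appearing in the left unit law (which leaves $\operatorname{CPM}(\epsilon)$ unchanged), and then invoke Lemma \ref{projective_algebras} to kill the remaining phases. Your additional remarks---that the snake equations follow from the Frobenius and unit laws, and that positivity of $\delta^\dagger\circ\delta$ forces the specialness scalar to be exactly $1$ rather than a phase---are points the paper leaves implicit, so the proposal is if anything slightly more explicit than the original.
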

\begin{proof}
    Now let $(\Delta,E,\Delta^\dagger,E^\dagger)$ be a $\dagger$-SSFA in $\CPMCategory{\fHilbCategory}$.
    Because $(\Delta, E)$ is a comonoid which is isometric and satisfies the snake equations, we now know that $\Delta$ and $E$ are pure, i.e. that we can find linear maps $\delta$ and $\epsilon$ in $\fHilbCategory$ such that $\Delta = \operatorname{CPM}(\delta)$ and $E = \operatorname{CPM}(\epsilon)$ arise by doubling.
    We now wish to conclude that $(\delta,\epsilon,\delta^\dagger,\epsilon^\dagger)$ form a $\dagger$-SSFA in $\fHilbCategory$, but this doesnt immediately follow from the equations in $\CPMCategory{\fHilbCategory}$: while $\delta$ is certainly an isometry in $\fHilbCategory$, the associativity law, unit laws, symmetry law and Frobenius law for $\delta$ and $\epsilon$ are only guaranteed to hold up to phase.
    In fact, $(\delta,\epsilon,\delta^\dagger,\epsilon^\dagger)$ is not, in general, a $\dagger$-SSFA in $\fHilbCategory$.
    However, it is easy to show (cf. Lemma \ref{projective_algebras} below) that $(\delta,e^{-i\lambda}\epsilon,\delta^\dagger,e^{i\lambda}\epsilon^\dagger)$ is always a $\dagger$-SSFA in $\fHilbCategory$, where $e^{i\lambda}$ is the phase associated to the identity in the left unit law.
    Because $\operatorname{CPM}(e^{-i\lambda}\epsilon) = \operatorname{CPM}(\epsilon)$, this is enough to prove our result.
\end{proof}

\section*{Acknowledgements}

The author would like to thank Chris Heunen \cite{heunen2012completely,heunen2011mathoverflow} and Sergio Boixo \cite{heunen2012completely} for formulating the original question.
The author would also like to thank an anonymous QPL 2022 reviewer for thoroughly checking the correctness of the proofs, as well as suggesting several improvements to the presentation of this work.

% \nocite{*}
% \input{biblio.bbl}
\bibliographystyle{eptcs}
\bibliography{bibliography}

\end{document}